\theoremstyle{thmsty}
\newtheorem{thm}{Theorem}
\theoremstyle{definition}
\begin{document}
\fontsize{19}{18pt}\selectfont
\begin{center}
\textbf{An aberration criterion for conditional models}
\end{center}
\fontsize{10}{16pt}\selectfont

\fontsize{12}{20pt}\selectfont

\begin{center}
Ming-Chung Chang\\
National Central University
\end{center}

\begin{abstract}
Conditional models with one pair of conditional and conditioned factors in \citet{Mukerjee:17} are extended to two pairs in this paper.
The extension includes the parametrization, effect hierarchy, sufficient conditions for universal optimality, aberration, complementary set theory and the strategy for finding minimum aberration designs.
A catalogue of 16-run minimum aberration designs under conditional models is provided.
For five to twelve factors, all 16-run minimum aberration designs under conditional models are also minimum aberration under traditional models.

\emph{Key words}: Effect hierarchy, Bayesian induced prior, universal optimality, contamination, complementary set.

\end{abstract}

\section{Introduction}
Two-level factorial designs are widely applicable in diverse fields especially when the purpose of experiments is factor screening.
There has been rich literature on the evaluation of such designs.
Minimum aberration, proposed in \citet{Fries:80}, is the most popular criterion under model uncertainty.
We refer readers to \citet{Mukerjee:06}, \citet{Jeff:09} and \citet{Cheng:14} for further references.

There is a situation that the factorial effects of one factor are more meaningful to be defined conditionally on each fixed level of another factor.
Sliding level experiments in engineering are of the case \citep[p.~346]{Jeff:09}.
\citet{Jeff:14} provided more examples.
Another situation is that a tradition model is used first and then transformed to a conditional model for specific purposes.
Refer to the de-aliasing method in \citet{Su:15} for more detail.
The model under this kind of experiments is referred to as a \emph{conditional model} in \citet{Mukerjee:17}.
Under this model, \citet{Mukerjee:17} proposed a new minimum aberration criterion and provided catalogues for 16-run and 32-run minimum aberration designs when there is only one pair of conditional and conditioned factors.

In this paper, we consider two-level factorials with two pairs of conditional and conditioned factors.
For each pair, the main effect and interaction effects involving one factor are defined conditionally on each fixed level of the other factor.
The former is called a \emph{conditional factor} and the latter is called a \emph{conditioned factor}.
The remaining factors are called \emph{traditional factors}.
We extend the parametrization, effect hierarchy order, aberration and complementary set theory in \citet{Mukerjee:17} to this setting.
Catalogues of $16$-run and $32$-run minimum aberration designs under a condition model are also provided.

The paper is organized as follows.
Section \ref{Sec: Parametrization and Effect hierarchy} introduces the parametrization and a new hierarchy order of effects under a conditional model.
Some sufficient conditions for a design to be universally optimal under a conditional model with only main effects are given in Section \ref{Sec: Universally optimal designs in the absence of interactions}.
Section \ref{Sec: Minimum aberration criterion} introduces a new minimum aberration criterion.
Section \ref{Sec: Regular designs: complementary set thoery} develops a complementary set theory, which is useful for finding minimum aberration designs for large number of factors.
An efficient computational procedure is developed in Section \ref{Sec: An efficient computation procedure} to search for minimum aberration designs for any number of factors.
Section \ref{Sec: Concluding remarks} concludes our work.

\section{Parametrization and Effect hierarchy}\label{Sec: Parametrization and Effect hierarchy}
\subsection{Parametrization}
Consider a $2^n$ factorial design with $n$ ($\geq 4$) factors $F_{1},...,F_{n}$, each at levels $0$ and $1$.
The $n$ factors consist of two pairs of conditional and conditioned factors, say $F_{1}, F_{2}$ and $F_{3}, F_{4}$.
The main effect and interaction effects involving $F_{1}$ (respectively, $F_{3}$) are defined conditionally on each fixed level of $F_{2}$ (respectively, $F_{4}$).
Define $\Omega$ as the set of $\nu=2^n$ binary $n$-tuples.
For $(i_1,...,i_n)\in\Omega$, let $\tau(i_1\cdots i_n)$ be the treatment effect of treatment combination $i_1\cdots i_n$.
Similarly, we write $\theta(i_1\cdots i_n)$ for the factorial effect $F_{1}^{i_1}\cdots F_{n}^{i_n}$ when $i_1\cdots i_n$ is nonnull, and $\theta(0\cdots 0)$ for the grand mean.
Let $\boldsymbol{\tau}$ and $\boldsymbol{\theta}$ be $\nu\times 1$ vectors with elements $\tau(i_1\cdots i_n)$ and $\theta(i_1\cdots i_n)$ arranged in the lexicographic order, respectively.
Then the traditional full factorial model is given by
\begin{align}\label{Relation_tau_theta}
\boldsymbol{\tau}=\mathbf{H}^{\otimes n}\boldsymbol{\theta},
\end{align}
where $\otimes$ represents the Kronecker product and $\mathbf{H}^{\otimes n}$ denotes the $n$-fold Kronecker product of
\begin{align*}
\mathbf{H}=\left( \begin{array}{cc}
1 & 1 \\
1 & -1  \end{array} \right),
\end{align*}
a Hadamard matrix of order two.
Let $\mathbf{H}(0)=(1,1)$ and $\mathbf{H}(1)=(1,-1)$ be the top and bottom rows of $\mathbf{H}$, respectively.

Let $\beta(j_1\cdots j_n)$ be the factorial effect $F_{1}^{i_1}\cdots F_{n}^{i_n}$ under a conditional model with conditional and conditioned factors $F_{1},F_{2}$ and $F_{3},F_{4}$, respectively.
Denote the vector with the $\nu$ elements $\beta(j_1\cdots j_n)$'s by $\boldsymbol{\beta}$.
Because there are two pairs of conditional and conditioned factors, we can reparametrize $\boldsymbol{\theta}$ by $\boldsymbol{\beta}$ with
\begin{align}\label{Relation_beta_tau}
\boldsymbol{\beta}=\nu^{-1}\mathbf{W}\otimes \mathbf{H}^{\otimes (n-4)}\boldsymbol{\tau},
\end{align}
where
\begin{align*}
\mathbf{W}=\left( \begin{array}{l}
\mathbf{H}(0) \otimes \mathbf{H} \otimes \mathbf{H}(0) \otimes \mathbf{H}\\
\mathbf{H}(1) \otimes \sqrt{2}\mathbf{I}_{2}  \otimes\mathbf{H}(0) \otimes \mathbf{H}\\
\mathbf{H}(0) \otimes \mathbf{H}  \otimes\mathbf{H}(1) \otimes \sqrt{2}\mathbf{I}_{2}\\
\mathbf{H}(1) \otimes \sqrt{2}\mathbf{I}_{2}  \otimes \mathbf{H}(1) \otimes \sqrt{2}\mathbf{I}_{2}
\end{array} \right),
\end{align*}
in which $\mathbf{I}_{2}$ is the identity matrix of order two.
We cluster $\boldsymbol{\beta}$ into vectors representing unconditional and conditional factorial effects of various orders.
Define
\begin{align*}
\Omega_{0l}=&\{ (j_{1},...,j_{n}): j_{1}=j_{3}=0, \text{ and $l$ of $j_{2}, j_{4},...,j_{n}$ equal 1} \}, \\
\Omega_{1l}=&\{ (j_{1},...,j_{n}): j_{1}=1, j_{2}=0,1, j_{3}=0, \text{ and $l-1$ of $j_{4},...,j_{n}$ equal 1} \} \\
&\cup\{ (j_{1},...,j_{n}): j_{3}=1, j_{4}=0,1, j_{1}=0, \text{ and $l-1$ of $j_{2},j_{5},...,j_{n}$ equal 1} \}, \\
\Omega_{2l}=&\{ (j_{1},...,j_{n}): j_{1}=j_{3}=1, j_{2}=0,1, j_{4}=0,1, \text{ and $l-2$ of $j_{5},...,j_{n}$ equal 1} \},
\end{align*}
where $1\leq l\leq n-2$.
Let $\boldsymbol{\beta}_{sl}$ be the vector with elements $\beta(j_1\cdots j_n)$, where $(j_1,...,j_n)\in\Omega_{sl}$.

\subsection{Effect hierarchy}
The effect hierarchy of the $\beta(j_1\cdots j_n)$'s is defined via a prior specification on $\boldsymbol{\tau}$ in terms of a zero-mean Gaussian random function such that $\text{cov}(\boldsymbol{\tau})=\sigma^2\mathbf{R}^{\otimes n}$, where $\sigma^2>0$ and the $2\times 2$ matrix $\mathbf{R}$ has diagonal elements $1$ and off-diagonal elements $\rho$, $0<\rho<1$.
Then the prior covariance matrix of $\boldsymbol{\beta}$ is given by
\begin{align*}
\text{cov}(\boldsymbol{\beta})
&=\nu^{-2}\{\mathbf{W}\otimes\mathbf{H}^{\otimes (n-4)}\}\text{cov}(\boldsymbol{\tau})\{\mathbf{W}\otimes\mathbf{H}^{\otimes (n-4)}\}^T\\
&=\sigma^2\nu^{-2}\{\mathbf{W}\mathbf{R}^{\otimes 4} \mathbf{W}^T\}\otimes\{\mathbf{H}\mathbf{R}\mathbf{H}\}^{\otimes (n-4)}.
\end{align*}
The following theorem gives the variances of the $\beta(j_1\cdots j_n)$'s.
\begin{thm}\label{Thm: prior_var}
For a $(j_{1},...,j_{n})\in \Omega_{sl}$, we have
\[ \text{var}(\beta(j_{1}\cdots j_{n}))=\sigma^2\nu^{-1}(1+\rho)^{n-l-s}(1-\rho)^{l}. \]
\end{thm}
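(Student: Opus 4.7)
The plan is to extract the diagonal entries from the Kronecker-factored covariance matrix $\text{cov}(\boldsymbol{\beta})=\sigma^2\nu^{-2}\{\mathbf{W}\mathbf{R}^{\otimes 4}\mathbf{W}^T\}\otimes\{\mathbf{H}\mathbf{R}\mathbf{H}\}^{\otimes(n-4)}$ already displayed just above the theorem. Since only the diagonal is needed, off-diagonal structure of each Kronecker factor is irrelevant; the real work is to read off the powers of $1\pm\rho$ that arise for each index pattern.

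First I would compute four elementary quadratic forms by direct matrix multiplication: $\mathbf{H}(0)\mathbf{R}\mathbf{H}(0)^T = 2(1+\rho)$, $\mathbf{H}(1)\mathbf{R}\mathbf{H}(1)^T = 2(1-\rho)$, $\mathbf{H}\mathbf{R}\mathbf{H} = \text{diag}(2(1+\rho),\,2(1-\rho))$, and $(\sqrt{2}\mathbf{I}_2)\mathbf{R}(\sqrt{2}\mathbf{I}_2)=2\mathbf{R}$. The first three are diagonal; the fourth is not, but crucially its diagonal entries are both $2$, so the conditioning indices $j_2,j_4$ attached to $\sqrt{2}\mathbf{I}_2$ factors will not enter the variance. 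This flatness is the reason the definition of $\Omega_{sl}$ treats $j_2,j_4$ as free when $s\geq 1$.

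Next I would apply the mixed-product rule to evaluate each of the four diagonal blocks $\mathbf{W}_k\mathbf{R}^{\otimes 4}\mathbf{W}_k^T$ as a four-fold Kronecker product of the building blocks above, with one block for each choice $(j_1,j_3)\in\{0,1\}^2$. Combining with $(\mathbf{H}\mathbf{R}\mathbf{H})^{\otimes(n-4)}$, which supplies a factor $2(1+\rho)$ for each $j_k=0$ and $2(1-\rho)$ for each $j_k=1$ ($k\geq 5$), yields a closed form for every diagonal entry of $\text{cov}(\boldsymbol{\beta})$ up to the $\sigma^2\nu^{-2}$ scaling.

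Finally I would carry out the index accounting for $s\in\{0,1,2\}$. For $s=0$ all four leading Kronecker factors are of $\mathbf{H}$ type, so the exponent of $(1-\rho)$ is $j_2+j_4+j_5+\cdots+j_n = l$. For $s=1$ the single $\mathbf{H}(1)$ contributes one extra $(1-\rho)$ and the companion $\sqrt{2}\mathbf{I}_2$ suppresses $j_2$ or $j_4$, so the total $(1-\rho)$ exponent is again $l$. For $s=2$ two extra $(1-\rho)$'s appear and both $j_2,j_4$ are suppressed, giving exponent $l$ once more. In every case the $(1+\rho)$ exponent is $n-l-s$, the accumulated powers of $2$ produce $\nu = 2^n$ which cancels one factor of $\nu^{-1}$, and the variance becomes $\sigma^2\nu^{-1}(1+\rho)^{n-l-s}(1-\rho)^l$ as claimed. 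The only real obstacle is bookkeeping: confirming case by case that the hidden indices $j_2,j_4$ drop out while the $l-s$ visible conditioning indices behave like traditional ones.
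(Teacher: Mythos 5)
Your proposal is correct and follows essentially the same route as the paper: both extract the diagonal of $\sigma^2\nu^{-2}\{\mathbf{W}\mathbf{R}^{\otimes 4}\mathbf{W}^T\}\otimes\{\mathbf{H}\mathbf{R}\mathbf{H}\}^{\otimes(n-4)}$, arrive at the same four diagonal blocks $4(1+\rho)^2(\mathbf{H}\mathbf{R}\mathbf{H})^{\otimes 2}$, $8(1-\rho^2)\mathbf{R}\otimes(\mathbf{H}\mathbf{R}\mathbf{H})$, $8(1-\rho^2)(\mathbf{H}\mathbf{R}\mathbf{H})\otimes\mathbf{R}$, $16(1-\rho)^2\mathbf{R}\otimes\mathbf{R}$, and do the same exponent bookkeeping over $\Omega_{sl}$. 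Your use of the mixed-product rule on each block of $\mathbf{W}$ is a slightly cleaner way to reach those blocks than the paper's explicit $4\times 4$ block multiplication, but it is the same argument.
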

\begin{proof}
Recall that $\text{cov}(\boldsymbol{\beta})
=\sigma^2\nu^{-2}\{\mathbf{W}\mathbf{R}^{\otimes 4} \mathbf{W}^T\}\otimes\{\mathbf{H}\mathbf{R}\mathbf{H}\}^{\otimes (n-4)}$.
It can be easily verified that $\mathbf{H}\mathbf{R}\mathbf{H}=2\text{diag}(1+\rho,1-\rho)$,
\begin{align*}
\mathbf{W}=\left( \begin{array}{cccc}
\mathbf{H}^{\otimes 2} & \mathbf{H}^{\otimes 2} & \mathbf{H}^{\otimes 2} & \mathbf{H}^{\otimes 2} \\
\sqrt{2}\mathbf{I}_{2}\otimes \mathbf{H} & \sqrt{2}\mathbf{I}_{2}\otimes \mathbf{H} & -\sqrt{2}\mathbf{I}_{2}\otimes \mathbf{H} & -\sqrt{2}\mathbf{I}_{2}\otimes \mathbf{H} \\
\sqrt{2}\mathbf{H}\otimes \mathbf{I}_{2}  & -\sqrt{2}\mathbf{H}\otimes \mathbf{I}_{2} & \sqrt{2}\mathbf{H}\otimes \mathbf{I}_{2} & -\sqrt{2}\mathbf{H}\otimes \mathbf{I}_{2} \\
2\mathbf{I}_{2}^{\otimes 2} & -2\mathbf{I}_{2}^{\otimes 2} & -2\mathbf{I}_{2}^{\otimes 2} & 2\mathbf{I}_{2}^{\otimes 2}
\end{array} \right),
\end{align*}
and
\begin{align*}
\mathbf{R}^{\otimes 4}=\left( \begin{array}{cccc}
\mathbf{R}^{\otimes 2} & \rho\mathbf{R}^{\otimes 2} & \rho\mathbf{R}^{\otimes 2} & \rho^2\mathbf{R}^{\otimes 2} \\
\rho\mathbf{R}^{\otimes 2} & \mathbf{R}^{\otimes 2} & \rho^2\mathbf{R}^{\otimes 2} & \rho\mathbf{R}^{\otimes 2} \\
\rho\mathbf{R}^{\otimes 2} & \rho^2\mathbf{R}^{\otimes 2} & \mathbf{R}^{\otimes 2} & \rho\mathbf{R}^{\otimes 2} \\
\rho^2\mathbf{R}^{\otimes 2} & \rho\mathbf{R}^{\otimes 2} & \rho\mathbf{R}^{\otimes 2} & \mathbf{R}^{\otimes 2}
\end{array} \right).
\end{align*}
The $\mathbf{W}\mathbf{R}^{\otimes 4} \mathbf{W}^T$ is a $16\times 16$ matrix, which can be regarded as a $4\times 4$ block-matrix with each block being a $4\times 4$ matrix.
Denote the $(i,j)$th block-matrix of $\mathbf{W}\mathbf{R}^{\otimes 4} \mathbf{W}^T$ by $[\mathbf{W}\mathbf{R}^{\otimes 4} \mathbf{W}^T]_{ij}$.
Then by calculation, we get $[\mathbf{W}\mathbf{R}^{\otimes 4} \mathbf{W}^T]_{ij}=\mathbf{0}$ when $i\neq j$; for $i=j$, we obtain $[\mathbf{W}\mathbf{R}^{\otimes 4} \mathbf{W}^T]_{11}=4(1+\rho)^2(\mathbf{H}\mathbf{R}\mathbf{H})^{\otimes 2}$,
$[\mathbf{W}\mathbf{R}^{\otimes 4} \mathbf{W}^T]_{22}=8(1-\rho^2)\mathbf{R}\otimes (\mathbf{H}\mathbf{R}\mathbf{H})$,
$[\mathbf{W}\mathbf{R}^{\otimes 4} \mathbf{W}^T]_{33}=8(1-\rho^2)(\mathbf{H}\mathbf{R}\mathbf{H})\otimes \mathbf{R}$ and
$[\mathbf{W}\mathbf{R}^{\otimes 4} \mathbf{W}^T]_{44}=16(1-\rho)^2 \mathbf{R} \otimes \mathbf{R}$.

The diagonal elements of $(\mathbf{H}\mathbf{R}\mathbf{H})^{\otimes 2}$, denoted by $\text{diag}((\mathbf{H}\mathbf{R}\mathbf{H})^{\otimes 2})$, can be obtained as $4(1+\rho)^2,4(1-\rho)(1+\rho),4(1-\rho)(1+\rho),4(1-\rho)^2$.
We also have $\text{diag}(\mathbf{R}\otimes (\mathbf{H}\mathbf{R}\mathbf{H}))=2(1+\rho,1-\rho,1+\rho,1-\rho)$,
$\text{diag}((\mathbf{H}\mathbf{R}\mathbf{H})\otimes \mathbf{R})=2(1+\rho,1+\rho,1-\rho,1-\rho)$ and
$\text{diag}(\mathbf{R}\otimes \mathbf{R})=(1,1,1,1)$.
Thus, we get $\text{diag}([\mathbf{W}\mathbf{R}^{\otimes 4} \mathbf{W}^T]_{11})=16((1+\rho)^4,(1-\rho)(1+\rho)^3,(1-\rho)(1+\rho)^3,(1-\rho)^2(1+\rho)^2)$,
$\text{diag}([\mathbf{W}\mathbf{R}^{\otimes 4} \mathbf{W}^T]_{22})=16((1+\rho)^2(1-\rho),(1+\rho)(1-\rho)^2,(1+\rho)^2(1-\rho),(1+\rho)(1-\rho)^2)$,
$\text{diag}([\mathbf{W}\mathbf{R}^{\otimes 4} \mathbf{W}^T]_{33})=16((1+\rho)^2(1-\rho),(1+\rho)^2(1-\rho),(1+\rho)(1-\rho)^2,(1+\rho)(1-\rho)^2)$ and
$\text{diag}([\mathbf{W}\mathbf{R}^{\otimes 4} \mathbf{W}^T]_{44})=16((1-\rho)^2,(1-\rho)^2,(1-\rho)^2,(1-\rho)^2)$.

Because the variances are only related to the diagonal elements of $\text{cov}(\boldsymbol{\beta})$, one can easily check for a $(j_{1},...,j_{n})\in \Omega_{sl}$,
$ \text{var}(\beta(j_{1}\cdots j_{n}))=\sigma^2\nu^{-1}(1+\rho)^{n-l-s}(1-\rho)^{l} $
based on the above calculation.
\end{proof}

Denote the variances of $\beta(j_{1}\cdots j_{n})$, where $(j_{1},...,j_{n})\in \Omega_{sl}$, by $V_{sl}$.
From Theorem \ref{Thm: prior_var}, it is clear that $V_{0l}>V_{1l}>V_{2l}$ for $2\leq l\leq n-2$.
Because $V_{2l}/V_{0,l+1}=1/(1-\rho^2)>1$ for all $0<\rho<1$, we have
\begin{align}\label{prior_hierarchy}
V_{01}>V_{11}>V_{02}>V_{12}>V_{22}>V_{03}>V_{13}>V_{23}>\cdots >V_{0,n-2}>V_{1,n-2}>V_{2,n-2}.
\end{align}
In view of (\ref{prior_hierarchy}), we define the following effect hierarchy under a conditional model as follows.
Under a conditional model, the unconditional main effects are the most important, while the conditional main effects are positioned next; then come the unconditional two-factor interactions, followed by the one-pair conditional two-factor interactions, then two-pair conditional two-factor interactions, and so on.

\section{Universally optimal designs in the absence of interactions}\label{Sec: Universally optimal designs in the absence of interactions}
\subsection{Linking the traditional and conditional models}
The connection between conditional effects $\boldsymbol{\beta}$ and traditional factorial effects $\boldsymbol{\theta}$ can be established by (\ref{Relation_tau_theta}) and (\ref{Relation_beta_tau}) as follows:
\begin{align*}
\boldsymbol{\beta}
&=\nu^{-1}\mathbf{W}\otimes \mathbf{H}^{\otimes (n-4)}\boldsymbol{\tau}\\
&=\nu^{-1}\mathbf{W}\otimes \mathbf{H}^{\otimes (n-4)}\mathbf{H}^{\otimes n}\boldsymbol{\theta}\\
&=\nu^{-1}\{\mathbf{W}\mathbf{H}^{\otimes 4}\}\otimes \{\mathbf{H}\mathbf{H}\}^{\otimes (n-4)}\boldsymbol{\theta}.
\end{align*}
By using the fact $\mathbf{H}\mathbf{H}=2\mathbf{I}_{2}$ and
\begin{align*}
\mathbf{W}\mathbf{H}^{\otimes 4}=
4\left( \begin{array}{cccc}
\mathbf{H}^{\otimes 2}\mathbf{H}^{\otimes 2} & \boldsymbol{0} & \boldsymbol{0} & \boldsymbol{0} \\
\boldsymbol{0} & \boldsymbol{0} & \sqrt{2}\mathbf{I}_{2}\otimes\mathbf{H}\mathbf{H}^{\otimes 2} & \boldsymbol{0} \\
\boldsymbol{0} & \sqrt{2}\mathbf{H}\otimes\mathbf{I}_{2}\mathbf{H}^{\otimes 2} & \boldsymbol{0} & \boldsymbol{0} \\
\boldsymbol{0} & \boldsymbol{0} & \boldsymbol{0} & 2\mathbf{I}_{2}^{\otimes 2}\mathbf{H}^{\otimes 2}
\end{array} \right),
\end{align*}
we obtain
\begin{align*}
\boldsymbol{\beta}=
\left( \begin{array}{cccc}
\mathbf{I}_{2}^{\otimes (n-2)} & \boldsymbol{0} & \boldsymbol{0} & \boldsymbol{0} \\
\boldsymbol{0} & \boldsymbol{0} & \frac{1}{\sqrt{2}}\mathbf{H}\otimes\mathbf{I}_{2}\otimes \mathbf{I}_{2}^{\otimes (n-4)} & \boldsymbol{0} \\
\boldsymbol{0} & \frac{1}{\sqrt{2}}\mathbf{I}_{2}\otimes\mathbf{H}\otimes \mathbf{I}_{2}^{\otimes (n-4)} & \boldsymbol{0} & \boldsymbol{0} \\
\boldsymbol{0} & \boldsymbol{0} & \boldsymbol{0} & \frac{1}{2}\mathbf{H}^{\otimes 2}\otimes \mathbf{I}_{2}^{\otimes (n-2)}
\end{array} \right)
\boldsymbol{\theta},
\end{align*}
which implies
\begin{align*}
\boldsymbol{\theta}=
\left( \begin{array}{cccc}
\mathbf{I}_{2}^{\otimes (n-2)} & \boldsymbol{0} & \boldsymbol{0} & \boldsymbol{0} \\
\boldsymbol{0} & \boldsymbol{0} & \frac{1}{\sqrt{2}}\mathbf{H}\otimes\mathbf{I}_{2}\otimes \mathbf{I}_{2}^{\otimes (n-4)} & \boldsymbol{0} \\
\boldsymbol{0} & \frac{1}{\sqrt{2}}\mathbf{I}_{2}\otimes\mathbf{H}\otimes \mathbf{I}_{2}^{\otimes (n-4)} & \boldsymbol{0} & \boldsymbol{0} \\
\boldsymbol{0} & \boldsymbol{0} & \boldsymbol{0} & \frac{1}{2}\mathbf{H}^{\otimes 2}\otimes \mathbf{I}_{2}^{\otimes (n-2)}
\end{array} \right)
\boldsymbol{\beta}
\end{align*}
because $\mathbf{H}^{-1}=(1/2)\mathbf{H}$.
This yields
\begin{align*}
&\theta(0j_{2}0j_{4}j_{5}\cdots j_{n})=\beta(0j_{2}0j_{4}j_{5}\cdots j_{n}),\\
&\theta(1j_{2}0j_{4}j_{5}\cdots j_{n})=\frac{1}{\sqrt{2}}\left\{\beta(100j_{4}j_{5}\cdots j_{n})+\delta(j_{2})\beta(110j_{4}j_{5}\cdots j_{n})\right\},\\
&\theta(0j_{2}1j_{4}j_{5}\cdots j_{n})=\frac{1}{\sqrt{2}}\left\{\beta(0j_{2}10j_{5}\cdots j_{n})+\delta(j_{4})\beta(0j_{2}11j_{5}\cdots j_{n})\right\},\\
&\theta(1j_{2}1j_{4}j_{5}\cdots j_{n})=\frac{1}{2}\{\beta(1010j_{5}\cdots j_{n})+\delta(j_{4})\beta(1011j_{5}\cdots j_{n})\\
&\hspace{3.5cm}+\delta(j_{2})\beta(1110j_{5}\cdots j_{n})+\delta(j_{2})\delta(j_{4})\beta(1111j_{5}\cdots j_{n})\},
\end{align*}
where $\delta(j)=-2j+1$.

Consider an $N$-run design represented by $\mathbf{D}$, an $N\times n$ matrix with $1$ being the high level and $-1$ being the low level.
Denote the corresponding $N\times 2^n$ full model matrix under a traditional model by $\mathbf{X}$.
Denote the vector of responses by $\boldsymbol{y}$.
Then we have $E(\boldsymbol{y})=\mathbf{X}\boldsymbol{\theta}$.
Each column of $\mathbf{X}$ is represented by $\boldsymbol{x}(j_{1}\cdots j_{n})$, $(j_{1},...,j_{n})\in\Omega$.
Let
\begin{align*}
&\boldsymbol{z}(0j_{2}0j_{4}j_{5}\cdots j_{n})=\boldsymbol{x}(0j_{2}0j_{4}j_{5}\cdots j_{n}),\\
&\boldsymbol{z}(1j_{2}0j_{4}j_{5}\cdots j_{n})=\frac{1}{\sqrt{2}}\left\{\boldsymbol{x}(100j_{4}j_{5}\cdots j_{n})+\delta(j_{2})\boldsymbol{x}(110j_{4}j_{5}\cdots j_{n})\right\},\\
&\boldsymbol{z}(0j_{2}1j_{4}j_{5}\cdots j_{n})=\frac{1}{\sqrt{2}}\left\{\boldsymbol{x}(0j_{2}10j_{5}\cdots j_{n})+\delta(j_{4})\boldsymbol{x}(0j_{2}11j_{5}\cdots j_{n})\right\},\\
&\boldsymbol{z}(1j_{2}1j_{4}j_{5}\cdots j_{n})=\frac{1}{2}\{\boldsymbol{x}(1010j_{5}\cdots j_{n})+\delta(j_{4})\boldsymbol{x}(1011j_{5}\cdots j_{n})\\
&\hspace{3.5cm}+\delta(j_{2})\boldsymbol{x}(1110j_{5}\cdots j_{n})+\delta(j_{2})\delta(j_{4})\boldsymbol{x}(1111j_{5}\cdots j_{n})\},
\end{align*}
where $\delta(j)=-2j+1$.
Let $\mathbf{Z}_{sl}$ and $\mathbf{X}_{sl}$ consist of the $\boldsymbol{z}(j_{1}\cdots j_{n})$'s and $\boldsymbol{x}(j_{1}\cdots j_{n})$'s respectively, where $(j_{1},...,j_{n})\in\Omega_{sl}$.
Then the conditional model under $\mathbf{D}$ is
\begin{align}\label{Cond_model}
E(\boldsymbol{y})=\boldsymbol{z}(0\cdots 0)\beta(0\cdots 0)+\sum_{s=0}^{2}\sum_{l=1}^{n-2}\mathbf{Z}_{sl}\boldsymbol{\beta}_{sl}.
\end{align}

\subsection{Universally optimal designs}
If all interactions are assumed to be absent, then the model (\ref{Cond_model}) reduces to
\begin{align}\label{Cond_model_ME}
E(\boldsymbol{y})=\boldsymbol{z}(0\cdots 0)\beta(0\cdots 0)+\mathbf{Z}_{01}\boldsymbol{\beta}_{01}+\mathbf{Z}_{11}\boldsymbol{\beta}_{11}.
\end{align}
In the following, we present a theorem which gives some requirements for a design to be universally optimal under model (\ref{Cond_model_ME}).
\begin{thm}\label{Thm: universally_optimal}
Suppose an $N$-run design $\mathbf{D}$ satisfies
\begin{enumerate}
\item[(i).] $\mathbf{D}$ is an orthogonal array of strength two;
\item[(ii).] all eight triples of symbols occur equally often when $\mathbf{D}$ is projected onto $F_{1},F_{2},F_{j}$, $j\in\{4,5,...,n\}$;
\item[(iii).] all eight triples of symbols occur equally often when $\mathbf{D}$ is projected onto $F_{3},F_{4},F_{j}$, $j\in\{2,5,...,n\}$;
\item[(iv).] all sixteen triples of symbols occur equally often when $\mathbf{D}$ is projected onto $F_{1},F_{2},F_{3},F_{4}$.
\end{enumerate}
Then $\mathbf{D}$ is universally optimal among all $N$-run designs for inference on $\boldsymbol{\beta}_{01}$ and $\boldsymbol{\beta}_{11}$ under model (\ref{Cond_model_ME}).
\end{thm}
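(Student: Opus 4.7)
The plan is to verify Kiefer's sufficient condition for universal optimality: the reduced information matrix $\mathbf{M}$ for $(\boldsymbol{\beta}_{01}^{T},\boldsymbol{\beta}_{11}^{T})^{T}$, obtained after eliminating the grand mean, equals $N\mathbf{I}_{n+2}$, and $\mathrm{tr}(\mathbf{M})=N(n+2)$ is the maximum attainable over all competing $N$-run designs. Since $N\mathbf{I}_{n+2}$ is completely symmetric, these two facts together imply universal optimality of $\mathbf{D}$.

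Write $\mathbf{Z}=[\mathbf{Z}_{01},\mathbf{Z}_{11}]$, so that $\mathbf{M}=\mathbf{Z}^{T}(\mathbf{I}_{N}-N^{-1}\boldsymbol{1}\boldsymbol{1}^{T})\mathbf{Z}$. Substituting the defining formulas for the columns of $\mathbf{Z}_{11}$ in terms of the traditional columns $\boldsymbol{x}(\cdot)$, every entry of $\mathbf{M}$ becomes a linear combination of inner products $\boldsymbol{x}(j)^{T}\boldsymbol{x}(j')$, and each such inner product equals $\pm N$ times the balance count of the projection of $\mathbf{D}$ onto the set of factors whose indices differ between $j$ and $j'$. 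The four hypotheses feed in as follows. Condition (i) kills all two-factor projections, hence all inner products between main-effect columns and between those columns and the interaction columns $F_{1}F_{2}$, $F_{3}F_{4}$. Conditions (ii) and (iii) kill the three-factor projections of the form $F_{1}F_{2}F_{j}$ and $F_{3}F_{4}F_{j}$ that arise from pairing a conditional column with an unconditional column. Condition (iv) kills the residual three- and four-factor projections within $\{F_{1},F_{2},F_{3},F_{4}\}$ that arise from pairing an $F_{1}$-conditional column with an $F_{3}$-conditional column. The diagonal entries of $\mathbf{M}$ are all $N$: immediate for $\mathbf{Z}_{01}$, and for $\mathbf{Z}_{11}$ this uses that the inner product of the $F_{1}$-column with the $F_{1}F_{2}$-column equals the column sum of $F_{2}$, which vanishes under (i).

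For trace maximality, each of the $n-2$ columns in $\mathbf{Z}_{01}$ contributes $\|\boldsymbol{z}\|^{2}-N^{-1}(\boldsymbol{1}^{T}\boldsymbol{z})^{2}\le N$ to $\mathrm{tr}(\mathbf{M})$. For the pair of $F_{1}$-conditional columns indexed by $\delta(j_{2})=\pm 1$, the cross terms inside the two squared norms cancel, so these squared norms sum to exactly $2N$ regardless of the design, while the two mean-correction terms combine into a nonnegative quantity that vanishes iff the $F_{1}$ and $F_{1}F_{2}$ columns of $\mathbf{X}$ each sum to zero. Hence this pair contributes at most $2N$ to the trace, with equality under (i); the $F_{3}$-pair is handled identically. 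Summing yields $\mathrm{tr}(\mathbf{M})\le(n-2)N+2N+2N=(n+2)N$, attained by any $\mathbf{D}$ satisfying (i) alone.

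The main obstacle, I expect, will be the bookkeeping of the cross-products between $F_{1}$-conditional and $F_{3}$-conditional columns: these expand into four traditional inner products whose simultaneous vanishing genuinely requires the full four-way balance of (iv), because the three-way balances of (ii) and (iii) never bridge the two pairs $(F_{1},F_{2})$ and $(F_{3},F_{4})$. This is the step where the two-pair theorem departs qualitatively from its single-pair analogue in Mukerjee (2017), and it accounts for the separate appearance of condition (iv) in the hypotheses.
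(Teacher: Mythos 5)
Your proposal is correct and follows essentially the same route as the paper's proof: establish that the reduced information matrix equals $N$ times the identity under (i)--(iv), note that its trace then attains the design-independent upper bound $\mathrm{tr}(\mathbf{Z}_{1}^{T}\mathbf{Z}_{1})$, and invoke Kiefer's sufficiency via complete symmetry plus trace maximality. Your dimension count $n+2$ (namely $n-2$ unconditional plus $4$ conditional main effects) is in fact the correct one --- the paper's stated bound $N(n+1)$ and $\mathbf{M}=N\mathbf{I}_{n+1}$ appear to carry a typo from the one-pair setting --- and your explicit verification that the cross terms in the conditional columns' squared norms cancel is a detail the paper asserts without comment.
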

\begin{proof}
Let $\mathbf{Z}_{1}=(\mathbf{Z}_{01},\mathbf{Z}_{11})$.
Denote the information matrix of $\boldsymbol{\beta}_{01}$ and
$\boldsymbol{\beta}_{11}$ under model (\ref{Cond_model_ME}) by $\mathbf{M}$.
Because $\mathbf{Z}_{1}^T\mathbf{Z}_{1}-\mathbf{M}$ is nonnegative definite, we have
\begin{align}\label{Ineq: infor}
\text{tr}[\mathbf{M}]\leq \text{tr}[\mathbf{Z}_{1}^T\mathbf{Z}_{1}]=N(n+1)
\end{align}
for every $N$-run design.
Note that $\mathbf{M}$ can be obtained by
\begin{align*}
\mathbf{M}=\mathbf{Z}_{1}^T\{\mathbf{I}_{N}-\boldsymbol{z}(0\cdots 0)[\boldsymbol{z}(0\cdots 0)^T\boldsymbol{z}(0\cdots 0)]^{-1}\boldsymbol{z}(0\cdots 0)^T\}\mathbf{Z}_{1},
\end{align*}
which can be simplified as $\mathbf{M}=\mathbf{Z}_{1}^T\{\mathbf{I}_{N}-\frac{1}{N}\boldsymbol{1}_{N}\boldsymbol{1}_{N}^T\}\mathbf{Z}_{1}$ because $\boldsymbol{z}(0\cdots 0)=\boldsymbol{1}_{N}$.
Under the conditions (i),...,(iv), it is easy to verify that $\mathbf{Z}_{1}^T\boldsymbol{1}_{N}=\boldsymbol{0}$ and $\mathbf{Z}_{1}^T\mathbf{Z}_{1}=N\mathbf{I}_{n+1}$.
Thus $\mathbf{M}=N\mathbf{I}_{n+1}$ and $\text{tr}[\mathbf{M}]$ reaches the upper bound in (\ref{Ineq: infor}).
The result now follows from \citet{Kiefer:75}.
\end{proof}

\section{Minimum aberration criterion}\label{Sec: Minimum aberration criterion}
Set $n\geq 4$ to avoid trivialities.
We consider designs meeting (i),...,(iv) of Theorem \ref{Thm: universally_optimal}.
Under model (\ref{Cond_model_ME}), $\widehat{\boldsymbol{\beta}}_{h1}=N^{-1}\mathbf{Z}_{h1}^T\boldsymbol{y}$ is the best linear unbiased estimator of $\boldsymbol{\beta}_{h1}$, $h=0,1$.
We revert to the full model (\ref{Cond_model}) to assess the impact of possible presence of interactions on $\widehat{\boldsymbol{\beta}}_{h1}$.
Under model (\ref{Cond_model}), $\widehat{\boldsymbol{\beta}}_{h1}$ is no longer unbiased but has bias $N^{-1}\sum_{s=0}^{2}\sum_{l=2}^{n-2}\mathbf{Z}_{h1}^T\mathbf{Z}_{sl}\boldsymbol{\beta}_{sl}$.
A reasonable measure of the bias in $\widehat{\boldsymbol{\beta}}_{h1}$ caused by the interactions, as in \citet{Tang:99}, is
\begin{align*}
K_{sl}(h)=N^{-2}\text{tr}[\mathbf{Z}_{h1}^T\mathbf{Z}_{sl}\mathbf{Z}_{sl}^T\mathbf{Z}_{h1}]=N^{-2}\text{tr}[\mathbf{X}_{h1}^T\mathbf{X}_{sl}\mathbf{X}_{sl}^T\mathbf{X}_{h1}],
\end{align*}
where the last equality holds because $\mathbf{X}_{sl}$ is an orthogonal transform of $\mathbf{Z}_{sl}$.

Based on the effect hierarchy in (\ref{prior_hierarchy}), a minimum aberration design is one which minimizes the terms of
\begin{align}\label{Seq_K}
K=\{ K_{02}(0),K_{02}(1),K_{12}(0),K_{12}(1),K_{22}(0),K_{22}(1),K_{03}(0),K_{03}(1),... \}
\end{align}
in a sequential manner from left to right.

\section{Regular designs: complementary set thoery}\label{Sec: Regular designs: complementary set thoery}
We now focus on regular designs under the conditional model.
Let $\Delta_{r}$ be the set of nonnull $r\times 1$ binary vectors.
All operations with these vectors are over the finite field GF(2).
A regular design in $N=2^r$ ($r<n$) runs is given by $n$ distinct vectors $b_{1},...,b_{n}$ from $\Delta_{r}$ such that the matrix $B=(b_{1},...,b_{n})$ has full row rank.
The design consists of the $N$ treatment combinations in the row space of $B$.

In the following, we define some useful quantities to represent $K_{sl}(h)$.
Let $A_{l}^{(1)}$ be the number of ways of choosing $l$ out of $b_{2},b_{4},...,b_{n}$ such that the sum of the chosen $l$ equals $0$.
Let $A_{l}^{(21)}$ be the number of ways of choosing $l$ out of $b_{4},...,b_{n}$ such that the sum of the chosen $l$ is in the set $\{b_{1},b_{1}+b_{2}\}$.
Let $A_{l}^{(22)}$ be the number of ways of choosing $l$ out of $b_{2},b_{5},...,b_{n}$ such that the sum of the chosen $l$ is in the set $\{b_{3},b_{3}+b_{4}\}$.
Let $A_{l}^{(2)}=A_{l}^{(21)}+A_{l}^{(22)}$.
Let $A_{l}^{(31)}$ be the number of ways of choosing $l$ out of $b_{4},...,b_{n}$ such that the sum of the chosen $l$ is in the set $\{0,b_{2}\}$.
Let $A_{l}^{(32)}$ be the number of ways of choosing $l$ out of $b_{2},b_{5},...,b_{n}$ such that the sum of the chosen $l$ is in the set $\{0,b_{4}\}$.
Let $A_{l}^{(3)}=A_{l}^{(31)}+A_{l}^{(32)}$.
Let $A_{l}^{(42)}$ be the number of ways of choosing $l$ out of $b_{2},b_{5},...,b_{n}$ such that the sum of the chosen $l$ is in the set $\{b_{1}+b_{3},b_{1}+b_{3}+b_{4}\}$.
Let $A_{l}^{(43)}$ be the number of ways of choosing $l$ out of $b_{5},...,b_{n}$ such that the sum of the chosen $l$ is in the set $\{b_{1}+b_{3},b_{1}+b_{3}+b_{4}\}$.
Let $A_{l}^{(52)}$ be the number of ways of choosing $l$ out of $b_{5},...,b_{n}$ such that the sum of the chosen $l$ is in the set $\{b_{1}+b_{2}+b_{3},b_{1}+b_{2}+b_{3}+b_{4}\}$.
Let $A_{l}^{(7)}$ be the number of ways of choosing $l$ out of $b_{5},...,b_{n}$ such that the sum of the chosen $l$ is in the set $\{b_{1}+b_{3},b_{1}+b_{2}+b_{3},b_{1}+b_{3}+b_{4},b_{1}+b_{2}+b_{3}+b_{4}\}$.
Let $A_{l}^{(8)}$ be the number of ways of choosing $l$ out of $b_{5},...,b_{n}$ such that the sum of the chosen $l$ is in the set $\{b_{1},b_{3},b_{1}+b_{2},b_{1}+b_{4},b_{2}+b_{3},b_{3}+b_{4},b_{1}+b_{2}+b_{4},b_{2}+b_{3}+b_{4}\}$.
The next result gives expressions for $K_{sl}(h)$ in terms of the quantities just introduced.
\begin{thm}
For $2\leq l\leq n-2$, we have
\begin{enumerate}
\item [(a).] $K_{0l}(0)=(l+1)A_{l+1}^{(1)}+(n-l-1)A_{l-1}^{(1)}$;
\item [(b).] $K_{0l}(1)=A_{l-1}^{(2)}+A_{l}^{(2)}$;
\item [(c).] $K_{1l}(0)=(n-l-1)A_{l-2}^{(2)} +A_{l-1}^{(2)}+lA_{l}^{(2)}$;
\item [(d).] $K_{1l}(1)=2A_{l-1}^{(3)}+2\{A_{l-1}^{(42)}+A_{l-2}^{(43)}+A_{l-1}^{(52)}\}$;
\item [(e).] $K_{2l}(0)=2A_{l-2}^{(7)}+(n-l-1)A_{l-3}^{(7)}+(l-1)A_{l-1}^{(7)}$;
\item [(f).] $K_{2l}(1)=2A_{l-2}^{(8)}$.
\end{enumerate}
\end{thm}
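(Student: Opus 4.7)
The starting point is to recognize that under a regular design, $K_{sl}(h)$ reduces to a combinatorial count. Each column $\boldsymbol{x}(u)$ of $\mathbf{X}$ is a $\pm 1$ vector whose pattern is determined by the $\mathrm{GF}(2)$ sum $\sum_{k:\, u_k=1}b_k$; two such columns $\boldsymbol{x}(u)$ and $\boldsymbol{x}(v)$ satisfy $\boldsymbol{x}(u)=\pm\boldsymbol{x}(v)$ exactly when these sums agree, in which case $\boldsymbol{x}(u)^T\boldsymbol{x}(v)=\pm N$; otherwise they are orthogonal. Hence
\begin{align*}
K_{sl}(h) \;=\; N^{-2}\sum_{u\in\Omega_{h1}}\sum_{v\in\Omega_{sl}}\bigl(\boldsymbol{x}(u)^T\boldsymbol{x}(v)\bigr)^2
         \;=\; \#\Bigl\{(u,v)\in\Omega_{h1}\times\Omega_{sl}:\;\sum_{k:\, u_k=1}\!b_k + \sum_{k:\, v_k=1}\!b_k = 0\Bigr\},
\end{align*}
so the task is to count aliasing pairs for each $(h,s,l)$ and match with the $A$-quantities.

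The plan is to enumerate these pairs according to (i) which element $u$ of $\Omega_{h1}$ is chosen and (ii) how the conditional bits $j_1,j_2,j_3,j_4$ of $v$ are fixed. For $h=0$, $u=b_i$ for some $i\in\{2,4,5,\ldots,n\}$; for $h=1$, $u$ ranges over the four conditional main effects $b_1,\, b_1+b_2,\, b_3,\, b_3+b_4$. For $v\in\Omega_{sl}$, once $s$ and the values of $j_2,j_4$ are fixed, the remaining datum is a subset $T$ of a prescribed ambient set (one of $\{b_2,b_4,\ldots,b_n\}$, $\{b_4,\ldots,b_n\}$, $\{b_2,b_5,\ldots,b_n\}$, or $\{b_5,\ldots,b_n\}$) of prescribed cardinality, and the aliasing equation collapses to $\sum_{k\in T}b_k = a$ for a target element $a$ or a small target set.

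Part (a) illustrates the pattern: with $u=b_i$ and $T\subseteq\{b_2,b_4,\ldots,b_n\}$ of size $l$, splitting on $b_i\in T$ gives $(n-l-1)A_{l-1}^{(1)}$ (an $(l-1)$-subset summing to $0$ together with $n-l-1$ external choices of $b_i$) and $(l+1)A_{l+1}^{(1)}$ (an $(l+1)$-subset summing to $0$ with $l+1$ ways to single out $b_i$). Parts (b) and (c) run along the same lines, with the extra step of splitting on whether $b_2$ or $b_4$ lies in $T$; the resulting partial counts regroup into $A_{l-1}^{(2)}+A_l^{(2)}$ and $(n-l-1)A_{l-2}^{(2)}+A_{l-1}^{(2)}+lA_l^{(2)}$ respectively.

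The real work is in parts (d)--(f). Here both $u$ and $v$ involve conditional factors, so the target for $\sum_{k\in T}b_k$ lies in sets such as $\{b_1+b_3,\,b_1+b_3+b_4\}$ or the four-element sets defining $A^{(7)}$ and $A^{(8)}$. The main obstacle is the bookkeeping: for each choice of $u$ and each setting of $j_2,j_4$, one must case-split on whether $b_2$ or $b_4$ lies in $T$, convert between the ambient sets listed above, and verify that the resulting partial counts regroup into the prescribed combinations $2A_{l-1}^{(3)}+2A_{l-1}^{(42)}+2A_{l-2}^{(43)}+2A_{l-1}^{(52)}$ for (d), $2A_{l-2}^{(7)}+(n-l-1)A_{l-3}^{(7)}+(l-1)A_{l-1}^{(7)}$ for (e), and $2A_{l-2}^{(8)}$ for (f). Once the framework of the first two paragraphs is in place each cell is routine, but there are enough cells to enumerate that keeping them straight is the chief difficulty.
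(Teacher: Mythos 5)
Your setup is the same as the paper's: since the design is regular, $N^{-2}\mathrm{tr}[\mathbf{X}_{h1}^T\mathbf{X}_{sl}\mathbf{X}_{sl}^T\mathbf{X}_{h1}]$ is the number of pairs of effects, one from $\Omega_{h1}$ and one from $\Omega_{sl}$, whose GF(2) sums of generators coincide (the paper phrases this via the symmetric difference $W_{1}\triangle W_{2}$ of words, which is the same thing), and part (a) you handle correctly exactly as the paper does by citing the standard Tang--Deng split on whether $b_i$ lies in the chosen subset. So the framework and part (a) are fine and not a different route.

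The gap is that for parts (b)--(f) you never actually perform the enumeration; you state that "each cell is routine" and that the partial counts "regroup into" the claimed combinations, but that regrouping is precisely the content of the theorem and it is not routine in the sense of following automatically from the framework. The delicate point, which your sketch does not touch, is that the ambient set over which the residual word $W$ ranges and the target set for its sum both change with the choice of $u$ and of $(j_2,j_4)$, and these changes are what produce the specific superscripts. For instance, in part (d) the cross-pair terms arise from $\{1\}\triangle(\{3\}\cup W)$ and $\{1\}\triangle(\{3,4\}\cup W)$ with $W$ ranging over words in $F_2,F_5,\ldots,F_n$ of length $l-1$ and target $\{b_1+b_3,\,b_1+b_3+b_4\}$, giving $A_{l-1}^{(42)}$; but for $u=\{1,2\}$ one must split on whether $2\in W$: if so, the $2$ cancels and one is left with an $(l-2)$-subset of $\{b_5,\ldots,b_n\}$ with the same target (this is $A_{l-2}^{(43)}$), and if not, the target shifts to $\{b_1+b_2+b_3,\,b_1+b_2+b_3+b_4\}$ (this is $A_{l-1}^{(52)}$). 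Without carrying out this kind of case split for each of (b)--(f), one cannot rule out, say, an extra term or a wrong index shift, and your proposal gives no mechanism for verifying the counts beyond asserting they work out. To complete the proof you would need to write down, for each $(h,s)$ pair, the explicit list of $u\in\Omega_{h1}$ and of the four $(j_2,j_4)$ patterns of $v$, record in each cell the ambient set, the subset size, and the target set, and check that the resulting tallies match the definitions of $A_l^{(2)},A_l^{(3)},A_l^{(42)},A_l^{(43)},A_l^{(52)},A_l^{(7)},A_l^{(8)}$ with the stated multiplicities; this is exactly what the paper's proof does case by case.
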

\begin{proof}
In the proof, a traditional factorial effect is represented by a word, i.e., a subset of $\{1,...,n\}$.
For two words $W_{1}$ and $W_{2}$, we define $W_{1}\triangle W_{2}$ to be $(W_{1}\cup W_{2})\setminus (W_{1}\cap W_{2})$.
Note that $K_{sl}(h)=N^{-2}\text{tr}[\mathbf{X}_{h1}^T\mathbf{X}_{sl}\mathbf{X}_{sl}^T\mathbf{X}_{h1}]$, which is the sum of squared entries of $N^{-2}\mathbf{X}_{h1}^T\mathbf{X}_{sl}$.
Because the design is regular, each squared entry is either one or zero according to whether the corresponding effects are aliased.

Part (a) is evident from \citet{Tang:99} except that the number of factors considered in the computation is $n-2$ (exclude $F_{1}$ and $F_{3}$).
So we have $K_{0l}(0)=(l+1)A_{l+1}^{(1)}+(n-l-1)A_{l-1}^{(1)}$.

For (b), let $S_{l}$ be the set of all words of length $l$ not containing any word involving $1$ and $3$.
Let $S_{l2}$ be a subset of $S_{l}$ and $2$ belongs to each word in $S_{l2}$ .
Then $\{1\}\triangle W$, $W\in S_{l2}$, is of the form: $\{1,2\}\cup (W\setminus \{2\})$, where $(W\setminus \{2\})$ is of length $l-1$.
Similarly, $\{1\}\triangle W$, $W\in (S\setminus S_{l2})$, is of the form: $\{1\}\cup W$, where $W$ is of length $l$.
Similar argument can be made when the roles of $F_{1}$ and $F_{3}$, $F_{2}$ and $F_{4}$ are interchanged, respectively.
By the definition of $A_{l}^{(21)}$ and $A_{l}^{(22)}$, we obtain $K_{0l}(1)=A_{l-1}^{(2)}+A_{l}^{(2)}$.

For (c), first consider $\{2\}\triangle (\{1\}\cup W)$ and $\{2\}\triangle (\{1,2\}\cup W)$, where $W$ runs through all the words not involving $F_{1},F_{2},F_{3}$ and have length $l-1$.
It is equivalent to consider $\{1,2\}\cup W$ and $\{1\}\cup W$ for such $W$'s.
This yields $A_{l-1}^{(21)}$ in $K_{1l}(0)$.
Next we consider $\{j\}\triangle (\{1\}\cup W)$ and $\{j\}\triangle (\{1,2\}\cup W)$, where $j=4,...,n$ and $W$ runs through all the words not involving $F_{1},F_{2},F_{3}$ and have length $l-1$.
By \citet{Tang:99}, this yields $(n-l-1)A_{l-2}^{(21)}+lA_{l}^{(21)}$ in $K_{1l}(0)$.
Similar argument can be made when the roles of $F_{1}$ and $F_{3}$, $F_{2}$ and $F_{4}$ are interchanged, respectively.
By the definition of $A_{l}^{(21)}$ and $A_{l}^{(22)}$, we obtain $K_{1l}(0)=(n-l-1)A_{l-2}^{(2)} +A_{l-1}^{(2)}+lA_{l}^{(2)}$.

For (d), first consider $\{1\}\triangle (\{1\}\cup W)$, $\{1\}\triangle (\{1,2\}\cup W)$, $\{1,2\}\triangle (\{1\}\cup W)$ and $\{1,2\}\triangle (\{1,2\}\cup W)$, where $W$ runs through all the words not involving $F_{1},F_{2},F_{3}$ and have length $l-1$.
It is equivalent to consider $W$, $\{2\}\cup W$, $\{2\}\cup W$ and $W$ for such $W$'s.
This yields $2A_{l-1}^{(31)}$ in $K_{1l}(1)$.
Next consider $\{1\}\triangle (\{3\}\cup W)$, $\{1\}\triangle (\{3,4\}\cup W)$, $\{1,2\}\triangle (\{3\}\cup W)$ and $\{1,2\}\triangle (\{3,4\}\cup W)$, where $W$ runs through all the words not involving $F_{1},F_{3},F_{4}$ and have length $l-1$.
For such $W$'s, $\{1\}\triangle (\{3\}\cup W)$ and $\{1\}\triangle (\{3,4\}\cup W)$ yield $A_{l-1}^{(42)}$ in $K_{1l}(1)$;
$\{1,2\}\triangle (\{3\}\cup W)$ and $\{1,2\}\triangle (\{3,4\}\cup W)$ yield $A_{l-2}^{(43)}+A_{l-1}^{(52)}$ in $K_{1l}(1)$.
Similar argument can be made when the roles of $F_{1}$ and $F_{3}$, $F_{2}$ and $F_{4}$ are interchanged, respectively.
By the definition of $A_{l}^{(3)}$, we obtain $K_{1l}(1)=2A_{l-1}^{(3)}+2\{A_{l-1}^{(42)}+A_{l-2}^{(43)}+A_{l-1}^{(52)}\}$.

For (e), first consider $\{2\}\triangle (\{1,3\}\cup W)$, $\{2\}\triangle (\{1,3,4\}\cup W)$, $\{2\}\triangle (\{1,2,3\}\cup W)$, $\{2\}\triangle (\{1,2,3,4\}\cup W)$ and $\{4\}\triangle (\{1,3\}\cup W)$, $\{4\}\triangle (\{1,3,4\}\cup W)$, $\{4\}\triangle (\{1,2,3\}\cup W)$, $\{4\}\triangle (\{1,2,3,4\}\cup W)$, where $W$ runs through all the words not involving $F_{1},F_{2},F_{3},F_{4}$ and have length $l-2$.
This yields $2A_{l-2}^{(7)}$ in $K_{2l}(0)$.
Next consider $\{j\}\triangle (\{1,3\}\cup W)$, $\{j\}\triangle (\{1,3,4\}\cup W)$, $\{j\}\triangle (\{1,2,3\}\cup W)$, $\{j\}\triangle (\{1,2,3,4\}\cup W)$ for $j=5,...,n$.
By \citet{Tang:99}, this yields $(n-l-1)A_{l-3}^{(7)}+(l-1)A_{l-1}^{(7)}$.
So we obtain $K_{2l}(0)=2A_{l-2}^{(7)}+(n-l-1)A_{l-3}^{(7)}+(l-1)A_{l-1}^{(7)}$.

For (f), consider $\{j\}\triangle (\{1,3\}\cup W)$, $\{j\}\triangle (\{1,3,4\}\cup W)$, $\{j\}\triangle (\{1,2,3\}\cup W)$, $\{j\}\triangle (\{1,2,3,4\}\cup W)$ for $j=1,3$, and $\{i,j\}\triangle (\{1,3\}\cup W)$, $\{i,j\}\triangle (\{1,3,4\}\cup W)$, $\{i,j\}\triangle (\{1,2,3\}\cup W)$, $\{i,j\}\triangle (\{1,2,3,4\}\cup W)$ for $(i,j)=(1,2),(3,4)$, where $W$ runs through all the words not involving $F_{1},F_{2},F_{3},F_{4}$ and have length $l-2$.
This yields $2A_{l-2}^{(8)}$.
So we obtain $K_{2l}(1)=2A_{l-2}^{(8)}$.
\end{proof}
In view of Theorem 2, sequential minimization of $K$ is equivalent to that of the terms of $A=\{A_{3}^{(1)},A_{2}^{(2)},A_{1}^{(42)}+A_{1}^{(52)},A_{1}^{(7)},A_{4}^{(1)},A_{3}^{(2)},...\}$, which is reduced to $\{A_{3}^{(1)},A_{2}^{(2)},A_{1}^{(7)},A_{4}^{(1)},A_{3}^{(2)},...\}$ because $F_{1},F_{2},F_{3},F_{4}$ form a complete factorial, implying $A_{1}^{(42)}+A_{1}^{(52)}=A_{1}^{(7)}$.

We now develop a complementary set theory for the first four terms in $A$.
Let $\widetilde{T}$ be the complement of $\{b_{2},b_{4},...,b_{n}\}$ in $\Delta_{r}$; $A_{l}(\widetilde{T})$ be the number of ways of choosing $l$ members of $\widetilde{T}$ such that the sum of the chosen $l$ equals 0.
Let $T_{12}=\widetilde{T}\setminus \{b_1,b_{1}+b_{2}\}$; $T_{34}=\widetilde{T}\setminus \{b_3,b_{3}+b_{4}\}$; $A_{l}^{(12)}(T_{12})$ be the number of ways of choosing $l$ members of $T_{12}$ such that the sum of the chosen $l$ is in $\{b_{1},b_{1}+b_{2}\}$; $A_{l}^{(34)}(T_{34})$ be the number of ways of choosing $l$ members of $T_{34}$ such that the sum of the chosen $l$ is in $\{b_{3},b_{3}+b_{4}\}$.
Let $T=\Delta_{r}\setminus \{b_5,...,b_n\}$.
Define $H_{i}(\cdot,\cdot)$ as (2) in \citet{Mukerjee:01}.
\begin{thm}
Let $c_{j}$, $j=1,...,5$, be constants irrelevant to designs.
We have
\begin{enumerate}
\item [(a).] $A_{3}^{(1)}=c_{1}-A_{3}(\widetilde{T})$;
\item [(b).] $A_{4}^{(1)}=c_{2}+A_{3}(\widetilde{T})+A_{4}(\widetilde{T})$;
\item [(c).] $A_{2}^{(2)}=c_{3}+A_{2}^{(12)}(T_{12})+A_{2}^{(34)}(T_{34})$;
\item [(d).] $A_{1}^{(7)}=B_1+B_2+B_3+B_4$, where $B_1=c_{41}+H_{1}(\{b_1+b_3\},T)$ if $b_1+b_3=b_{j}$ for some $j\in\{5,...,n\}$ and zero otherwise; $B_2=c_{42}+H_{1}(\{b_1+b_2+b_3\},T)$ if $b_1+b_2+b_3=b_{j}$ for some $j\in\{5,...,n\}$ and zero otherwise; $B_3=c_{43}+H_{1}(\{b_1+b_3+b_4\},T)$ if $b_1+b_3+b_4=b_{j}$ for some $j\in\{5,...,n\}$ and zero otherwise; $B_4=c_{44}+H_{1}(\{b_1+b_2+b_3+b_4\},T)$ if $b_1+b_2+b_3+b_4=b_{j}$ for some $j\in\{5,...,n\}$ and zero otherwise. The $c_{4j}$'s are constants for every design.
\end{enumerate}
\end{thm}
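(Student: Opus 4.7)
The plan is to derive each of the four identities by applying the standard Tang--Wu / MacWilliams-type complementary-set device to a suitably truncated subset of $\Delta_{r}$, much as in \citet{Mukerjee:01}, with extra bookkeeping to account for the special roles played by $b_{1},b_{2},b_{3},b_{4}$.

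First, for parts (a) and (b): recall that $A_{l}^{(1)}$ counts $l$-subsets of the $(n-2)$-element set $\mathcal{B}=\{b_{2},b_{4},\ldots,b_{n}\}$ summing to $0$, and $\Delta_{r}=\mathcal{B}\cup\widetilde{T}$ is a disjoint decomposition. The total number of $l$-subsets of $\Delta_{r}$ summing to $0$ depends only on $r$ and $l$, and it decomposes by splitting each subset into its $\mathcal{B}$-part (summing to some $t$) and its $\widetilde{T}$-part (summing to the same $t$). Since $|\mathcal{B}|=n-2$ is fixed, the counts $A_{i}(\mathcal{B})$ for small $i$ are design-independent functions of $n$; writing out the convolution identity for $l=3$ and $l=4$ and solving for $A_{3}^{(1)}$ and $A_{4}^{(1)}$ produces the stated expressions, with $c_{1},c_{2}$ absorbing all pieces that are constant across designs.

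For part (c), I would work with $A_{2}^{(21)}$ first: it counts $2$-subsets of $\{b_{4},\ldots,b_{n}\}$ whose sum lies in the target $\{b_{1},b_{1}+b_{2}\}$. Distinctness of the $b_{i}$'s guarantees that $\{b_{1},b_{1}+b_{2}\}$ is disjoint from $\mathcal{B}$, so I can partition $\Delta_{r}\setminus\{b_{1},b_{1}+b_{2}\}=\{b_{4},\ldots,b_{n}\}\cup\{b_{2}\}\cup T_{12}$ and count the $2$-subsets of the ambient set summing into the target in two ways. The pair-counts involving the singleton $\{b_{2}\}$ contribute linearly in $|\widetilde{T}|$ and are design-independent; they are absorbed into $c_{3}$, leaving $A_{2}^{(12)}(T_{12})$. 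Repeating the argument with $(F_{1},F_{2})$ and $(F_{3},F_{4})$ swapped produces the $A_{2}^{(34)}(T_{34})$ term, exactly as in the symmetry exploited in the proof of Theorem~2.

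For part (d), $A_{1}^{(7)}$ is an indicator sum: for each target $t\in\{b_{1}+b_{3},\,b_{1}+b_{2}+b_{3},\,b_{1}+b_{3}+b_{4},\,b_{1}+b_{2}+b_{3}+b_{4}\}$ one checks whether $t$ appears in $\{b_{5},\ldots,b_{n}\}$. Using $\Delta_{r}=\{b_{5},\ldots,b_{n}\}\cup T$ and the definition of $H_{1}$ in \citet{Mukerjee:01}, membership of $t$ in the column set $\{b_{5},\ldots,b_{n}\}$ is complementary to $H_{1}(\{t\},T)$ whenever $t$ is not forced to coincide with one of $b_{1},\ldots,b_{4}$. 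The case split in the statement simply records whether $t$ actually realises itself as some $b_{j}$, $j\geq 5$; in the opposite case the contribution is absorbed into the constant $c_{4j}$. The four targets are handled independently, producing $B_{1},\ldots,B_{4}$.

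The main obstacle is the bookkeeping in (c) and (d): one must verify that the auxiliary target sets are disjoint from the column sets over which the sums are taken, so the elementary complementary identity applies cleanly, and one must confirm that every term absorbed into $c_{1},c_{2},c_{3}$ or the $c_{4j}$'s is genuinely constant across all regular designs with parameters $(r,n)$. Once one instance is checked in detail, the symmetric $(F_{1},F_{2})\leftrightarrow(F_{3},F_{4})$ swap supplies the remaining cases essentially for free.
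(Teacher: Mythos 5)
Your overall strategy (MacWilliams-type complementation over a truncated ambient set) is the same as the paper's, which disposes of (a) and (b) by citing \citet{Tang:96} and of (c), (d) via Lemmas 1 and 3 of \citet{Mukerjee:01}; your treatments of (a), (b) and (d) are at the paper's level of detail and are acceptable. The genuine problem is in part (c), where your accounting of which terms are design-independent is wrong, and following your sketch literally produces the wrong sign. Write $U=\Delta_r\setminus\{b_1,b_1+b_2\}=\{b_4,\dots,b_n\}\cup\{b_2\}\cup T_{12}$ and count the $2$-subsets of $U$ whose sum lies in $\{b_1,b_1+b_2\}$; the total is the constant $2^r-4$. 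The pairs involving $b_2$ contribute nothing at all: a partner $x$ with $b_2+x\in\{b_1,b_1+b_2\}$ would have to be $b_1+b_2$ or $b_1$, both excluded from $T_{12}$, and $b_2+b_j\in\{b_1,b_1+b_2\}$ is ruled out by condition (ii) of Theorem 2. So there is nothing there to absorb into $c_3$. The term you must not discard is the cross-term between $\{b_4,\dots,b_n\}$ and $T_{12}$: for each $j\in\{4,\dots,n\}$ and each target $t$, the forced partner $t+b_j$ lies in $\{b_2\}\cup T_{12}$ exactly when $t+b_j\notin\{b_4,\dots,b_n\}$, so this cross-term equals $2(n-3)-2A_{2}^{(21)}$ and is design-dependent. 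Summing the three pieces gives $2^r-4=-A_{2}^{(21)}+A_{2}^{(12)}(T_{12})+2(n-3)$, i.e. $A_{2}^{(21)}=A_{2}^{(12)}(T_{12})+2n-2-2^r$, with the positive sign asserted in the theorem. If, as your sketch claims, everything besides $A_{2}^{(21)}$ and $A_{2}^{(12)}(T_{12})$ were constant, you would instead conclude $A_{2}^{(21)}=c-A_{2}^{(12)}(T_{12})$, which is false. The paper gets the sign right because Lemmas 1 and 3 of \citet{Mukerjee:01} already encode this cross-term; it then only needs the strength-two property to peel $\{b_2,b_1+b_2\}$ (respectively $\{b_1,b_2\}$) off the complement, each such pair contributing exactly one.

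A smaller point: the disjointness of the target set $\{b_1,b_1+b_2\}$ from $\{b_2,b_4,\dots,b_n\}$ does not follow from distinctness of the $b_i$ alone; $b_1+b_2=b_j$ is excluded only because the designs under consideration satisfy condition (ii) of Theorem 2, so that $b_1,b_2,b_j$ are linearly independent for $j\ge 4$. The same conditions are what keep the four targets in part (d) away from $0$ and from $b_1,\dots,b_4$. You should invoke these hypotheses explicitly wherever you dismiss a coincidence as impossible.
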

\begin{proof}
Parts (a) and (b) are evident from \citet{Tang:96}.

For (c), note that $A_{2}^{(21)}=H_{2}(\{b_1\},\{b_4,...,b_n\})+H_{2}(\{b_1+b_2\},\{b_4,...,b_n\})$, which can be simplified as $A_{2}^{(21)}=c+H_{2}(\{b_1\},\{b_2,b_1+b_2\}\cup T_{12})+H_{2}(\{b_1+b_2\},\{b_1,b_2\}\cup T_{12})$ by Lemmas 1 and 3 in \citet{Mukerjee:01}, where $c$ is a constant for every design.
Because the design is an orthogonal array of strength two, we have $H_{2}(\{b_1\},\{b_2,b_1+b_2\}\cup T_{12})=1+H_{2}(\{b_1\},T_{12})$ and $H_{2}(\{b_1+b_2\},\{b_1,b_2\}\cup T_{12})=1+H_{2}(\{b_1+b_2\},T_{12})$.
Hence $A_{2}^{(21)}=c+2+H_{2}(\{b_1\},T_{12})+H_{2}(\{b_1+b_2\},T_{12})=c+2+A_{2}^{(12)}(T_{12})$.
Similarly, $A_{2}^{(22)}=c'+2+A_{2}^{(34)}(T_{34})$, where $c'$ is a constant for every design.
Therefore, we have $A_{2}^{(2)}=c_{3}+A_{2}^{(12)}(T_{12})+A_{2}^{(34)}(T_{34})$ by letting $c_3=c+c'+4$.

For (d), note that $A_{1}^{(7)}=H_{1}(\{b_1+b_3\},\{b_5,...,b_n\})+H_{1}(\{b_1+b_3+b_4\},\{b_5,...,b_n\})+H_{1}(\{b_1+b_2+b_3\},\{b_5,...,b_n\})+H_{1}(\{b_1+b_2+b_3+b_4\},\{b_5,...,b_n\})$.
Let $F=\Delta_{r}\setminus \{b_1+b_3,b_5,...,b_n\}$.
If $b_1+b_3\neq b_{j}$ for $j=5,...,n$, then $H_{1}(\{b_1+b_3\},\{b_5,...,b_n\})=0$.
If $b_1+b_3=b_{j}$ for some $j\in\{5,...,n\}$, then $H_{1}(\{b_1+b_3\},\{b_5,...,b_n\})=c_{41}+H_{1}(\{b_1+b_3\},F)$ by Lemmas 1 and 3 in \citet{Mukerjee:01}, where $c_{41}$ is a constant for every design.
Since $b_1+b_3=b_{j}$ for some $j\in\{5,...,n\}$, we have $F=T$ and $H_{1}(\{b_1+b_3\},F)=H_{1}(\{b_1+b_3\},T)$.
Thus $H_{1}(\{b_1+b_3\},\{b_5,...,b_n\})=B_1$.
Similarly, we have $H_{1}(\{b_1+b_2+b_3\},T)=B_2$, $H_{1}(\{b_1+b_3+b_4\},T)=B_3$ and $H_{1}(\{b_1+b_2+b_3+b_4\},T)=B_4$.
So, $A_{1}^{(7)}=B_1+B_2+B_3+B_4$.
\end{proof}

\section{An efficient computation procedure}\label{Sec: An efficient computation procedure}
We now develop a fast computational procedure which covers regular or nonregular designs.
For $0\leq c\leq n-2$, let $Q_{0}(c)=1$, $Q_{1}(c)=2c-(n-4)$, and
\begin{align}\label{Eq: recusive_Q}
Q_{l}(c)=l^{-1}\{[2c-(n-4)]Q_{l-1}(c)-(n-l-2)Q_{l-2}(c)\}
\end{align}
where $2\leq l\leq n-2$.
Write $\widetilde{\mathbf{D}}$ for the subarray given by the last $n-4$ columns of $\mathbf{D}$.
For $1\leq u,w\leq N$, let $c_{uw}$ be the number of positions where the $u$th and $w$th rows of $\widetilde{\mathbf{D}}$ have the same entry, and $q_{sl}(u,w)$ be the $(u,w)$th element of $\mathbf{X}_{sl}\mathbf{X}_{sl}^T$.
Denote the $(u,w)$th element of $\mathbf{D}$ by $d_{uw}$.
Then the following result holds.
\begin{thm}\label{Thm: fast_search}
For $1\leq u,w\leq N$ and $2\leq l\leq n-2$, we have
\begin{enumerate}
\item [(a).] $q_{0l}(u,w)=(d_{u2}d_{w2}d_{u4}d_{w4})Q_{l-2}(c_{u,w})+(d_{u2}d_{w2}+d_{u4}d_{w4})Q_{l-1}(c_{u,w})+Q_{l}(c_{u,w})$;
\item [(b).] $q_{1l}(u,w)=(d_{u1}d_{w1}+d_{u1}d_{w1}d_{u2}d_{w2}+d_{u3}d_{w3}+d_{u3}d_{w3}d_{u4}d_{w4})Q_{l-1}(c_{u,w})$;
\item [(c).] $q_{2l}(u,w)=d_{u1}d_{w1}d_{u3}d_{w3}(1+d_{u2}d_{w2}+d_{u4}d_{w4}+d_{u2}d_{w2}d_{u4}d_{w4})Q_{l-2}(c_{u,w})$.
\end{enumerate}
\end{thm}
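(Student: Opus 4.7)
The starting point is that
\[
q_{sl}(u,w) \;=\; \sum_{(j_1,\ldots,j_n)\in\Omega_{sl}} \prod_{k:\,j_k=1} e_k, \quad \text{where } e_k := d_{uk}d_{wk},
\]
since each column of $\mathbf{X}_{sl}$ is an interaction column whose $(u,w)$ product contribution is $\prod_{k:j_k=1} e_k$. Among $e_5,\ldots,e_n$ exactly $c_{u,w}$ equal $+1$ by definition. I would first verify that $Q_l(c_{u,w})$ is the $l$-th elementary symmetric polynomial in $e_5,\ldots,e_n$, equivalently the coefficient of $x^l$ in $G(x)=(1+x)^{c_{u,w}}(1-x)^{n-4-c_{u,w}}$. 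The recursion (\ref{Eq: recusive_Q}) then drops out of the logarithmic-derivative identity $(1-x^2)G'(x)=[(2c_{u,w}-(n-4))-(n-4)x]\,G(x)$ after matching the coefficient of $x^{l-1}$, while the base cases $Q_0(c)=1$ and $Q_1(c)=2c-(n-4)$ follow by direct counting.

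\textbf{Partition step.} With $Q_l$ so interpreted, each of (a)--(c) is proved by partitioning $\Omega_{sl}$ according to the values of the ``paired'' indices $j_1,\ldots,j_4$ and evaluating the inner sum over $j_5,\ldots,j_n$ as a symmetric polynomial, i.e.\ as a $Q_{l'}(c_{u,w})$. For (a), the constraint $j_1=j_3=0$ lets $(j_2,j_4)$ range over $\{0,1\}^2$, and the four cases contribute $Q_l(c)$, $e_2 Q_{l-1}(c)$, $e_4 Q_{l-1}(c)$, and $e_2 e_4 Q_{l-2}(c)$ respectively; summing yields the stated formula. For (c), $j_1=j_3=1$ while $(j_2,j_4)\in\{0,1\}^2$ factors out as $e_1 e_3(1+e_2)(1+e_4)$ times the symmetric sum of size $l-2$ over $\{5,\ldots,n\}$, namely $Q_{l-2}(c_{u,w})$; expanding the product reproduces the expression in (c).

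\textbf{The hard case.} Part (b) requires the most care. Here $\Omega_{1l}$ splits into two symmetric pieces, one governed by the pair $(F_1,F_2)$ and the other by $(F_3,F_4)$. Summing the first piece over the free index $j_2\in\{0,1\}$ produces the prefactor $e_1(1+e_2)$ times a symmetric sum of size $l-1$ over $\{4,5,\ldots,n\}$, which I would rewrite using the inclusion/exclusion identity $h_{l-1}(e_4,e_5,\ldots,e_n)=Q_{l-1}(c_{u,w})+e_4 Q_{l-2}(c_{u,w})$; the second piece yields analogously $e_3(1+e_4)$ times $Q_{l-1}(c_{u,w})+e_2 Q_{l-2}(c_{u,w})$. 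The coefficient of $Q_{l-1}(c_{u,w})$ combines across the two pieces into $e_1+e_1 e_2+e_3+e_3 e_4$, matching the prefactor stated in the theorem. The main obstacle is the reconciliation of the residual $Q_{l-2}(c_{u,w})$ contributions—produced by pulling $e_4$ out of the symmetric sum in the first piece and $e_2$ out of it in the second—with the final compact formula; this bookkeeping is the only substantive step, the other two parts being direct partition arguments once the symmetric-polynomial interpretation of $Q_l$ has been established.
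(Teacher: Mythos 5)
Your overall strategy is the same as the paper's: write $q_{sl}(u,w)=\sum_{(j_1,\ldots,j_n)\in\Omega_{sl}}\prod_{k:j_k=1}e_k$ with $e_k=d_{uk}d_{wk}$, partition on $(j_1,\ldots,j_4)$, recognize the inner sums over $j_5,\ldots,j_n$ as elementary symmetric polynomials in $e_5,\ldots,e_n$, and identify these with $Q_{l}(c_{u,w})$ via the generating function $\prod_{j\geq5}(1+\xi e_j)=(1+\xi)^{c_{u,w}}(1-\xi)^{n-4-c_{u,w}}$ and its logarithmic derivative. Your treatment of the recursion and of parts (a) and (c) matches the paper's proof and is correct.

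The problem is part (b), and the step you defer as ``bookkeeping'' is not one you can complete. With $\Omega_{1l}$ as defined, the first piece allows $j_4=1$ (it is counted among the $l-1$ ones of $j_4,\ldots,j_n$) and the second allows $j_2=1$, so your own partition gives
\[
q_{1l}(u,w)=(e_1+e_1e_2+e_3+e_3e_4)\,Q_{l-1}(c_{u,w})+(e_1e_4+e_1e_2e_4+e_2e_3+e_2e_3e_4)\,Q_{l-2}(c_{u,w}),
\]
and the second summand neither vanishes nor folds into the first. A concrete check: at $u=w$ all $e_k=1$ and $c_{u,u}=n-4$, so $q_{1l}(u,u)=|\Omega_{1l}|=4\binom{n-3}{l-1}$, whereas the right-hand side of (b) equals $4Q_{l-1}(n-4)=4\binom{n-4}{l-1}$; the deficit $4\binom{n-4}{l-2}$ is exactly the residual $Q_{l-2}$ term you were hoping to reconcile away. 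So either the displayed formula in (b) must carry the extra $Q_{l-2}$ term, or $\Omega_{1l}$ must be reinterpreted with $j_4=0$ forced in its first piece and $j_2=0$ in its second (which would destroy the partition of $\Omega$, as the counts $\sum_{s,l}|\Omega_{sl}|+1=2^n$ confirm). The paper's own proof dismisses (b) with a bare ``similarly,'' so it offers no resolution; but as a proof of the statement as written, your part (b) is incomplete precisely at the step you flag as the main obstacle, and that step in fact fails.
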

\begin{proof}
For  $2\leq l\leq n-2$, let $\Sigma^{(l)}$ be the sum over binary tuples $j_{5}\cdots j_{n}$ such that $l$ of $j_{5},...,j_{n}$ equal $1$.
We have
\begin{align*}
q_{0l}(u,w)=&\Sigma^{(l)}\boldsymbol{x}(u;0000j_5\cdots j_n)\boldsymbol{x}(w;0000j_5\cdots j_n)\\
&+\Sigma^{(l-1)}\boldsymbol{x}(u;0100j_5\cdots j_n)\boldsymbol{x}(w;0100j_5\cdots j_n)\\
&+\Sigma^{(l-1)}\boldsymbol{x}(u;0001j_5\cdots j_n)\boldsymbol{x}(w;0001j_5\cdots j_n)\\
&+\Sigma^{(l-2)}\boldsymbol{x}(u;0101j_5\cdots j_n)\boldsymbol{x}(w;0101j_5\cdots j_n)\\
=&(d_{u2}d_{w2}d_{u4}d_{w4})\Psi_{l-2}(c_{u,w})+(d_{u2}d_{w2}+d_{u4}d_{w4})\Psi_{l-1}(c_{u,w})+\Psi_{l}(c_{u,w}),
\end{align*}
where $\Psi_{l}(u,w)=\Sigma^{(l)}\prod_{s=5}^{n}(d_{u}d_{w})^{j_s}$.
Similarly, we have
\begin{align*}
&q_{1l}(u,w)=(d_{u1}d_{w1}+d_{u1}d_{w1}d_{u2}d_{w2}+d_{u3}d_{w3}+d_{u3}d_{w3}d_{u4}d_{w4})\Psi_{l-1}(c_{u,w})\\
&q_{2l}(u,w)=d_{u1}d_{w1}d_{u3}d_{w3}(1+d_{u2}d_{w2}+d_{u4}d_{w4}+d_{u2}d_{w2}d_{u4}d_{w4})\Psi_{l-2}(c_{u,w}).
\end{align*}
The result will follow if $\Psi_{l}(u,w)=Q_{l}(c_{uw})$.
It is clear that $\Psi_{0}(u,w)=1$ and $\Psi_{1}(u,w)=c_{uw}+(-1)(n-4-c_{uw})=2c_{uw}-(n-4)$.
It remains to show $\Psi_{l}(u,w)$ satisfies the recursion relation (\ref{Eq: recusive_Q}).

Let $\Phi(\xi)=\prod_{j=5}^{n}(1+\xi d_{uj}d_{wj})$ and let $\Phi_{l}(\xi)$ be the $l$th derivative of $\Phi(\xi)$.
Note that $\Psi_{l}(u,w)=\Phi_{l}(0)/l!$.
Differentiation of $\log\Phi(\xi)$ yields
\begin{align*}
\Phi_{1}(\xi)&=\left( \sum_{j=5}^{n}\frac{d_{uj}d_{wj}}{1+\xi d_{uj}d_{wj}} \right) \Phi(\xi) \\
&=\left( \frac{c_{uw}}{1+\xi}-\frac{(n-4)-c_{uw}}{1-\xi} \right) \Phi(\xi),
\end{align*}
that is, $(1-\xi^2)\Phi_{1}(\xi)=\{ 2c_{uw}-(n-4)(1+\xi) \}\Phi(\xi)$.
Differentiating this $l-1$ and taking $\xi=0$, we get
\begin{align*}
\Phi_{l}(0)=[2c_{uw}-(n-4)]\Phi_{l-1}(0)-(l-1)(n-l-2)\Phi_{l-2}(0).
\end{align*}
This leads to (\ref{Eq: recusive_Q}) by using $\Psi_{l}(u,w)=\Phi_{l}(0)/l!$.
\end{proof}

With Theorem \ref{Thm: fast_search}, we can search for minimum aberration designs under conditional models through the following three steps:
\begin{enumerate}
\item [Step 1.] Start with a list of all nonisomorphic regular designs for given $N$ and $n$.
                         For $N=16$ and $32$, this can be done using the catalogs in \citet{Chen:93}.
\item [Step 2.] For each design in Step 1, permute its columns  such that the resulting design satisfies the conditions in Theorem \ref{Thm: universally_optimal}. Let the first four columns represent the two pairs of conditional and conditioned factors, that is, $F_{1},F_{2}$ and $F_{3},F_{4}$.
\item [Step 3.] For each design obtained in Step 2, calculate the sequence $K$ in (\ref{Seq_K}) by using Theorem \ref{Thm: fast_search}, and hence find a minimum aberration design.
\end{enumerate}

Table 1 exhibits the results of Steps 1-3 for $N=16$ and $5\leq n\leq 12$.
In the table, the numbers 1,2,4,8 represent basic factors in a design.
The other numbers represent added factors.
For example, for $n=5$, if the five factors are denoted by $A,B,C,D,E$, then the minimum aberration design is the one with the defining relation $E=ABCD$ because $15=1+2+4+8$.
We can see that all minimum aberration designs under conditional models are also minimum aberration under traditional models.
The finding supports using minimum aberration designs under traditional models to perform experiments in \citet{Su:15}.

Table 2 exhibits the results of Steps 1-3 for $N=32$ and $6\leq n\leq 18$.
In the table, the numbers 1,2,4,8,16 represent basic factors in a design.
The other numbers represent added factors.
For example, for $n=6$, if the five factors are denoted by $A,B,C,D,E,F$, then the minimum aberration design is the one with the defining relation $F=ABCDE$ because $31=1+2+4+8+15$.

\begin{table}[h]\label{Table: MA_N16}
\caption{Regular minimum aberration designs under conditional model for $N=16$} 
\centering
  \begin{tabular}{ | c | l |  }
    \hline
    $n$ & minimum aberration design \\ \hline
    5 & $(1,2,4,8,15)$  \\ \hline
    6 & $(1,8,2,4,7,11)$  \\ \hline
    7 &  $(1,2,4,8,7,11,13)$ \\ \hline
    8 &  $(1,2,4,8,7,11,13,14)$  \\ \hline
    9 &   $(2, 4, 8, 3, 1, 5, 9, 14, 15)$ \\ \hline
    10 &  $(1,6,2,8,4,3,5,9,14,15)$ \\ \hline
    11 &  $(4,8,5,10,1,2,3,6,9,13,14)$ \\ \hline
    12 &   $(2,5,6,10,1,4,8,3,9,13,14,15)$\\
    \hline
  \end{tabular}
\end{table}

\begin{table}[h]\label{Table: MA_N32}
\caption{Regular minimum aberration designs under conditional model for $N=32$} 
\centering
  \begin{tabular}{ | c | l |  }
    \hline
    $n$ & minimum aberration design \\ \hline
    6 & $(1,2,4,8,15,31)$  \\ \hline
    7 & $(1,8,16,7,2,4,27)$  \\ \hline
    8 &  $(4, 16, 7, 29, 1, 2, 8, 11)$ \\ \hline
    9 &  $(1, 4, 7, 29, 2, 8, 16, 11, 19)$  \\ \hline
    10 &   $(4,  8,  7,  19,  1,  2,  16,  11,  29, 30)$ \\ \hline
    11 &  $(16,  11,  14, 19,  1,  2,  4,  8,  7,  13, 21)^{*2}$ \\ \hline
    12 &  $(16,  11,  13, 19,  1,  2,  4,  8,  7,  14, 21, 22)^{*2}$ \\ \hline
    13 &  $(16,  11,  13, 19,  1,  2,  4,  8,  7,  14, 21, 22, 25)$ \\ \hline
    14 &  $(1,  4,  7,  11,  2,  8,  16,  13,  14, 19, 21, 22, 25, 26)$ \\ \hline
    15 &  $(1,  2,  4,  8,  16,  32,  7,  11,  13, 14, 19, 22, 25, 26, 28)$ \\ \hline
    16 &  $(1,  2,  4,  8,  16,  7,  11,  13,  14, 19, 21, 22, 25, 26, 28, 31)$\\  \hline
  \end{tabular}
\end{table}

\section{Concluding remarks}\label{Sec: Concluding remarks}
In this paper, we extend the work of \citet{Mukerjee:17} to two pairs of conditional and conditioned factors.
As mentioned in \citet{Mukerjee:17}, the number of conditional and conditioned pairs is seldom exceed two in practice.
Also, the effect hierarchy order is only irrelevant to the value of $r$ for the number of pairs not larger than two.
For more than two pairs, the effect hierarchy order is very complicated because different values of $r$ lead to different hierarchy.
We focus on regular designs in this paper.
Taking nonregular designs into consideration is left for future work.

\section*{Acknowledgments}
This research was supported by the Ministry of Science and Technology (Taiwan) via Grant Number 107-2118-M-008-001-MY2.

\bibliographystyle{plainnat}
\bibliography{D:/Dropbox/ref}
\end{document}